\newtheorem{theorem}{Theorem}[section]
\newtheorem{Lemma}[theorem]{Lemma}
\newenvironment{proof}[1][Proof]{\begin{trivlist}
		\item[\hskip \labelsep {\bfseries #1}]}{\end{trivlist}}
\begin{document}

\markboth{Chitturi and Pai}
{ Minimum-Link Rectilinear Covering Tour is NP-hard in $R^{4}$}

\title{Minimum-Link Rectilinear Covering Tour is NP-hard in $R^{4}$}
\author{$^{1,3}$ \small Bhadrachalam Chitturi and $^2$Jayakumar Pai\\\small$^1$Department of Computer Science and Engineering,\\\small Amrita Vishwa Vidyapeetham, Amritapuri, India\\
\small $^2$Department of Computer Science and Applications,\\ \small Amrita Vishwa Vidyapeetham, Amritapuri, India\\\small
$^3$Department of Computer Science,\\\small University of Texas at Dallas, Richardson, TX, USA
}

\date{}
\maketitle

\begin{abstract}
Given a set $P$ of $n$ points in $R^{d}$, a tour is a closed simple path that covers all the given points, i.e. a Hamiltonian cycle. 
A \textit{link} is a line segment connecting two points and a rectilinear link is parallel to one of the axes. 
The problems of defining a path and a tour with minimum number of links, also known as Minimum-Link Covering Path and Minimum-Link Covering Tour respectively are proven to be NP-hard in $R^2$. The corresponding rectilinear versions are also NP-hard in $R^2$. 

 A set of points is said to be in \textit{general position} for rectilinear versions of the problems if no two points share any coordinate. We call a set of points in $R^{d}$ to be in \textit{relaxed general position} if no three points share any coordinate and any two points can share at most one coordinate. That is, if the points are either in general position or in relaxed general position then an axis parallel line can contain at most one point. 
If points are in relaxed general position then these problems are NP-hard in $R^{10}$.  We prove that these two problems are in fact NP-hard in $R^{4}$. 
If points in $R^{d},~d>1$ are in general position then the time complexities of these problems, both basic and rectilinear versions, are unknown.\\
\emph{Keywords}:{Covering Path, Covering Tour, Rectilinear Covering, Computational Complexity, Vector Space, Computational Geometry.}
\end{abstract}


\section{Introduction}
The problem of covering a finite set of points with line segments is a fundamental problem in computational geometry with applications in fields like VLSI design, manufacturing etc.. This article addresses the problem of covering points with tours comprising of rectilinear line segments, which are axis parallel line segments. Let $ P =\{p_{1}, p_{2}, p_{3}... p_{k}\}$ be the given set of input points in $R^{d}$. A $link$ is a line segment that connects two points possibly from $P$. A $chain$ is a simple path of links that passes through all the points of $P$ whereas a $closed$-$chain$ is a simple circuit that passes through all points of $P$.
An intermediate point on a path is a turn or a bend \cite{stein2001approximation, jiang2015covering}. Two adjacent links are connected at a common point which is a $turn$.  
A $spanning\; path$ is a chain where the links connect points from P only. That is, a spanning path can take turns only at given input points. 
A $covering\;path$ is a polygonal chain which can turn at chosen points that are not in $P$.  Dumitrescu et al.\cite{dumitrescu2014covering} have shown that the number of links required for a spanning path is $n-1$ and at least $\left \lceil{\frac{n}{2}}\right \rceil$ links are required for a covering path.  The bounds for the number of links in a \textit{ minimum link spanning path problem} are studied in \cite{bereg2009traversing} when the paths are axes aligned.
The general covering problems include covering a given set of points in $R^{d}$ with line segments or with a chain or with a closed chain. These problems are respectively called \textit{Line Cover Problem, Covering Path Problem} and \textit{Covering Tour Problem}.  If the links are restricted to be axis parallel then we obtain the rectilinear versions of these problems. The general and rectilinear versions of these covering problems are posed as optimization problems where the number of turns are to be minimized.  A summary of the complexity results of these problems is given in Table 1.

 Covering problems with constraints on the points also have been investigated. A set of points is said to be in \textit{general position} if no three points are collinear  \cite{dumitrescu2014covering}. In this article we address rectilinear versions of these problems. For rectilinear versions a set of points is said to be in \textit{general position} if no two points share any coordinate \cite{jiang2015covering}. 
	The complexities of \textit{Covering Path Problem} and \textit{Covering Tour Problem} when the points are in general position are not known \cite{jiang2015covering}. 
	We call a set of points in $R^{d}$ to be in \textit{relaxed general position} if no three points share any coordinate and any two points can share at most one coordinate.	
	If a set of points in rectilinear versions of the problems are in either  general position or relaxed general position then they have at most one point on an axis parallel line for $d>2$. 	
When points are in relaxed general position Jiang \cite{jiang2015covering} proved that \textit{Minimum-Link Rectilinear Covering Tour} and \textit{Minimum-Link Rectilinear Covering Path} are NP-hard in $R^{10}$. In this article we show that these problems are NP-Hard in $R^{4}$.

\begin{table}[ht]
	\caption{A list of standard covering problems and known complexity results} 
	\centering 
	\begin{tabular}{|c |c|c| } 
		\hline\hline 
		\textbf{SlNo} &\textbf{ Problem} &\textbf{ Result} \\ [0.25ex] 
		\hline 
		1 & Line Cover &	NPH in $R^{2}$\cite{megiddo1982complexity} \\ 
		\hline
		2 &Rectilinear Line cover &	P in $R^{2}$, NPH in $R^{3}$ \cite{hassin1991approximation} \\
		\hline
		3& Minimum-Link Covering Path &	NPH in  $R^{2}$\cite{kranakis1994link} \\
		\hline
		4 & Minimum-Link Covering Tour &	NPH in $R^{2}$\cite{arkin2003minimum}  \\
		\hline
		5 &Minimum-Link Rectilinear Covering Path  & NPH in $R^{10}$\cite{jiang2015covering}\\
		\hline
		6&Minimum-Link Rectilinear Covering Tour &NPH in $R^{10}$\cite{jiang2015covering}\\
		\hline
		
	\end{tabular}
	\label{table:nonlin} 
\end{table}

The remainder of the article is organized as follows. Section 2 discusess the background. In Section 3 construction of points in $R^4$ from the corresponding points in $R^2$ grid is shown and the complexity proof of Min-Link Rectilinear Tour problem is given. Section 4 proves the complexity of Min-Link Rectilinear Path problem. Section 5 states the conclusions.

\section{Previous methods} 
The angle between adjacent links can be either $90^\circ$, $180^\circ$ or $270^\circ$ at any turn for rectilinear versions of the problems \cite{bereg2009traversing,jiang2015covering}.
Wang \textit{et al}. \cite{wang2014minimum} proved that the problems Minimum-Link Rectilinear Covering Tour and Minimum-Link Rectilinear Covering Path are both NP-Hard in $R^{2}$. Wang's proof relies on a reduction from cardinality constrained bipartite graph problem and in his method mutiple points are covered by a rectilinear line segment. 
Jiang's proof relies on a reduction from Hamiltonian cycles in grid graphs. In his construction no axis-parallel line in $R^{10}$ covers more than one constructed point. In addition to this, the constraints applied to the problems considered by Wang and Jiang are different. In the problem considered by the former there is no constraints on the points but for latter the points are in relaxed general position. In Jiangs method two points in $R^{10}$ shares a coordinate if the corresponding point in the grid are adjacent and hence the constructed points in $R^{10}$ are in relaxed general position.\\

Jiang used ten dimensions to represent the adjacencies and turns of points in the grid. He used four axes  $(a,\, b,\, c$ and $ d)$ in $R^{10}$  to represent the four adjacencies a point in a grid can have. He, then showed that a grid point can participate in exactly one of the six types of turns (shown in figure 1). He used six more axes  $(r,\, s,\, t,\, u,\, v$ and $ w)$ to represent the turn in which a point participates in such a way that if a point $p_{i}$ participate in a turn of type say 3, then the line segment which covers the corresponding point $q_{i} \in R^{10}$ is drawn parallel to one of the six axes say $t$, which represents turn of type 3. 
\begin{figure}[h!]
	\centering
	\includegraphics[width=0.7 \linewidth]{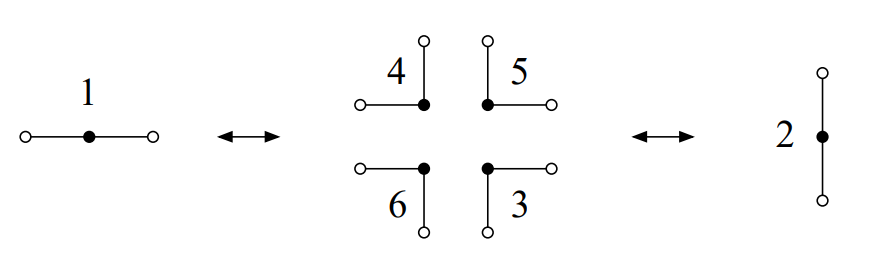}
	\caption{Six types of turns}
\end{figure}

In our method we use the first four dimensions $(a,\, b,\, c$ and $ d)$ used by Jiang in the same way to represent the adjacencies of grid points but the way we build the path and tour on the constructed points in $R^{4}$ differ from the same done by Jiang on constructed points in $R^{10}$. The subsequent sections discuss the rectlinear versions of the problems in relaxed general position.

\section{Minimum-Link Rectilinear Covering Tour in  $R^{4}$} 
$G (P, E)$ is a grid graph or simply a grid in the sequel, consists of a finite set $P$ of points in $R^{2}$ and every pair of adjacent points are connected by an edge. Two points $p_{i} = (x_{i}, y_{i})$  and $ p_{k} = (x_{k}, y_{k})$ are considered adjacent in the grid if $x_{i} = x_{k}$ and $y_{i} = y_{k} \pm 1$ or $y_{i} = y_{k}$ and  $ x_{i} = x_{k}\pm 1$. Let $|P| = n $ and $|E| = m$, we label each vertex of $P$ with a unique value from $\{1, 2...n\}$ and each edge of $ E$ with a unique value from $\{1, 2...m\}$. Let $p_{i} = (x_{i}, y_{i}) \in P$ be a grid point and let $q_{i} \in Q$ be the corresponding point in $R^{4}$. 
Let $(a, b, c, d)$ be the four axes of $R^{4}$. We specify a construction that maps a given $p_{i}$ to a specific $q_{i} = (a_{i}, b_{i}, c_{i}, d_{i})$. Our construction method is similar to that of M Jiang in \cite{jiang2015covering}. We call each $q_{i}$ as a \textit{constructed point}.

We determine the coordinates of $q_{i}$ based on the adjacencies of $p_{i}$, as follows. Let $p_{k}$ be a point adjacent to $p_{i}$ and let $j$ be the label of the edge $(p_{i}, p_{k})$. For every $q_{i}$ all its coordinates are initially assigned a value of $i$. Then we update the four coordinates of $q_{i}$ based on which one of the following cases applies.  
\\If $(y_{k} = y_{i})$\\
\hspace*{2em} If $x_{i}$ is odd and $x_{k} = x_{i}-1$, or if $x_{i}$ is even and $x_{k} = x_{i} + 1$ then set $a_{i} = n + j$\\ 
\hspace*{2em}	If $x_{i}$ is odd and $x_{k} = x_{i}+1$, or if $x_{i}$ is even and $x_{k} = x_{i}-1$, then set $b_{i} = n + j$\\
If $(x_{k} = x_{i})$\\
\hspace*{2em}If $y_{i}$ is odd and $y_{k} = y_{i}-1$, or if $y_{i}$ is even and $y_{k} = y_{i} + 1$ then set $c_{i} = n + j $\\
\hspace*{2em}If $y_{i}$ is odd and $y_{k} = y_{i} +1$, or if $y_{i}$ is even and $y_{k} = y_{i}-1$, then set $d_{i} = n + j$
\- An example of the construction is shown in fig 1. 
\begin{figure}[h!]
	\centering
	\includegraphics[width=\linewidth]{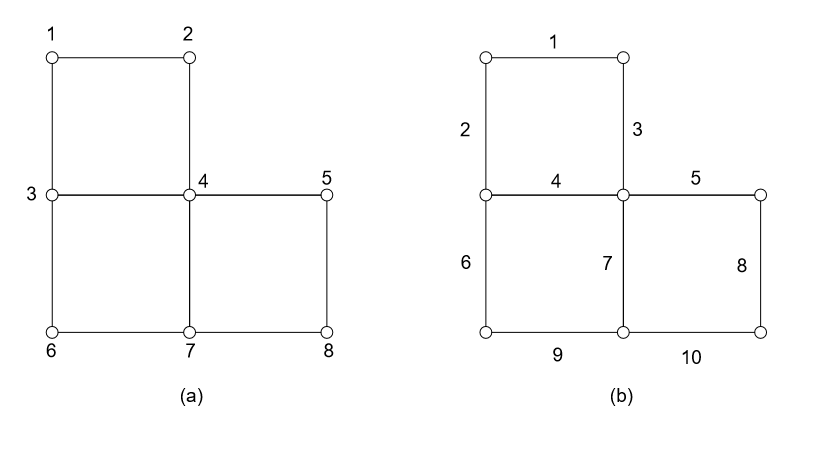}
	\includegraphics[width=\linewidth]{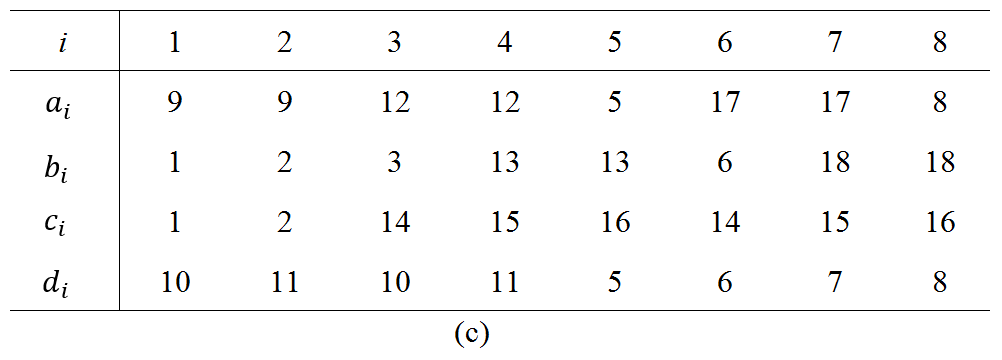}
	\caption{(a) Shows a sample grid in $R^{2}$ with labelled nodes and figure  (b) shows the labelling of edges. Figure (c) shows the coordinates of the constructed points in $R^{4}$}
\end{figure}
\begin{Lemma}
	\label{l1}
	The constructed points in $R^4$ are in relaxed general position.
\end{Lemma}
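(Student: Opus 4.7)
The plan is to exploit the disjoint ranges in which the two types of coordinate values live and combine that with a parity consistency check. Observe first that every coordinate of $q_i$ is either the ``initial'' value $i \in \{1,\dots,n\}$ or an ``updated'' value of the form $n+j$ for some edge $j$ incident to $p_i$. Since $n+j > n \geq i$, the set $\{1,\dots,n\}$ and the set $\{n+1,\dots,n+m\}$ are disjoint, so a shared coordinate between two constructed points $q_i, q_k$ cannot mix the two types, and cannot lie in $\{1,\dots,n\}$ either (that would force $i=k$). Hence any shared coordinate has the form $n+j$, and by the construction rules $j$ must be an edge incident to both $p_i$ and $p_k$, i.e.\ $j$ is the grid edge $(p_i,p_k)$.

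Next I would verify the parity consistency: if $j=(p_i,p_k)$ is a horizontal edge, then the construction rule updates the same axis ($a$ or $b$) for both endpoints, and similarly a vertical edge updates the same axis ($c$ or $d$) for both endpoints. This is a short case check on the four sub-cases of horizontal edges (the parity of $x_i$ together with the sign of $x_k-x_i$), and analogously for vertical edges. The purpose of the parity trick in Jiang's construction is precisely to guarantee this symmetry, so the check is routine but must be stated explicitly to justify that the shared value $n+j$ really appears on the same axis for both $q_i$ and $q_k$.

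With these two facts in hand, the two required properties follow immediately. For ``no three points share a coordinate,'' suppose $q_i, q_k, q_\ell$ all agree on some axis; by the first observation the shared value is $n+j$ for a single edge $j$ incident to all three of $p_i,p_k,p_\ell$, which is impossible since an edge has only two endpoints. For ``any two points share at most one coordinate,'' suppose $q_i$ and $q_k$ agreed on two different axes; each agreement would force a distinct grid edge between $p_i$ and $p_k$ (since horizontal and vertical edges trigger updates on disjoint pairs of axes, and within each pair the parity fixes which axis is updated), but the grid graph has at most one edge between any pair of vertices.

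The only non-trivial step is the parity consistency check in the second paragraph; the rest is essentially bookkeeping about which coordinate came from which edge. I expect no real obstacles beyond tabulating the four horizontal and four vertical parity cases and confirming the symmetry between $p_i$ and $p_k$.
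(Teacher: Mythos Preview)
Your argument is correct and rests on the same core observation as the paper: initial values lie in $\{1,\dots,n\}$ and updated values in $\{n+1,\dots,n+m\}$, so any coordinate shared by two distinct constructed points must equal $n+j$ for an edge $j$ joining the corresponding grid points (this is exactly the paper's case analysis in its part~(ii), rephrased). The organization differs, though. The paper splits the work into (i) adjacent $\Rightarrow$ exactly one shared coordinate, (ii) non-adjacent $\Rightarrow$ no shared coordinate, and (iii) for three \emph{consecutive} grid points the two shared axes are distinct. Your route is more economical: you go straight to ``at most one shared'' (only one edge between two grid vertices, and each edge updates a single axis of each endpoint) and ``no three share'' (an edge has only two endpoints). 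Two remarks. First, your parity-consistency check is not actually needed for the lemma as stated: what your final paragraph uses is only that a single edge updates exactly one axis of a given endpoint, which is local to that endpoint and does not require comparing the two endpoints; parity consistency is what guarantees the ``at least one shared'' direction in the paper's part~(i), which relaxed general position does not demand. Second, the paper's part~(iii) is a genuinely stronger statement than ``no three points share a coordinate''---it asserts that the shared \emph{axes} alternate along a path---and it is invoked later in Theorem~3.2 to prevent consecutive three-link segments of the tour from merging at a constructed point. Your proof of the lemma is complete, but be aware it does not supply that extra fact.
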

\begin{proof}
	Let $p_{i}$ and  $p_{k}$ be two points in $R^{2}$ and let $q_{i}$ and $q_{k}$ be the corresponding constructed points in $R^{4}$. (i) If $p_{i}$ and $p_{k}$ are adjacent then we show that exactly one coordinate of $q_{i}$ and $q_{k}$ has identical value. (ii) Otherwise, we show that $q_{i}$ and $q_{k}$ differ in all coordinates. (iii) If $p_{a}$,  $p_{b}$ and  $p_{c}$  are three consecutive points in $R^{2}$ then the coordinate shared by the constructed points $q_{a}$, $q_{b}$ differs from the coordinate shared by $q_{b}$,  $q_{c}$.
	
	(i) Consider two adjacent points $p_{i}$ and $p_{k}$ in $R^{2}$. Before update procedure $q_{i} = (i, i, i, i)$ and $q_{k} = (k, k, k, k)$. \\
	Case A: If $p_{i}$ and $p_{k}$ share the same $x$ value, then after update procedure, $q_{i} = (i, i, n+j, i)$ and   $q_{k} = (k, k, n+j, k)$ or $q_{i} = (i, i, i, n+j)$ and $q_{k} = (k, k, k, n+j)$.\\
	Case B: If $p_{i}$ and $p_{k}$ share the same $y$ value, then after update procedure, $q_{i} = (n+j, i, i, i)$ and  $q_{k} = (n+j, k, k, k)$ or $q_{i} = (i, n+j, i, i)$ and $q_{k} = (k, n+j, k, k)$. \\
	In both cases the points $q_{i}$ and $q_{k}$ share exactly one of the four axes.
	
	(ii)Assume that for some integers $u$ and $i (u\neq i)$ there exist two non-adjacent points $p_{u}$ and $p_{i}$ such that the corresponding constructed points $q_{u}$ and $q_{i}$ in $R^{4}$ share at least one coordinate. Without loss of generality, assume that the points $q_{u}$ and $q_{i}$ share the $a-axis$, i.e. $a_{u} = a_{i}$. 
	The following cases that update the coordinates prove that this happens only if the points are adjacent (contradiction) or $u = i$ (contradiction). During the construction the update procedure may update the axis a as follows.\\
	Case 1: Both $a_{u}$ and $a_{i}$ are updated. This happens only if the points are adjacent, which contradicts the assumption that they are non-adjacent.\\
	Case 2: $a_{u}$ is updated and $a_{i}$ is not updated. The coordinate $a_{u}$ is updated implies that $a_{u} \gets n+j$ where $j$ is label of the edge $(q_{u}, q_{i})$.  After updation $a_{u} = a_{i}$ implies $n+j=i$. As $n$ is the total number of points in the grid this case is not possible. \\
	Case 3: $a_{u}$ is not updated and $a_{i}$ is updated. Similar to Case 2.\\
	Case 4: Neither $a_{u}$ nor $a_{i}$ is updated Initially $a_{u} = u$ and $a_{i} = i$. So $a_{u} = a_{i}$ implies $u = i$. This contradicts the fact that $u \neq i$.
	
	(iii) Consider three consecutive points $q_a, q_b, q_c$ in $R^{4}$. The corresponding  adjacent pairs of points in the grid are ($p_a,\, p_b$) and ($p_b,\, p_c$). Let the coordinates of the three points $p_a,\, p_b,\,p_c$ be $(x_a,y_a), (x_b,y_b)$ and $(x_c,y_c)$ respectively. We prove the claim only for one axis (say, $a-axis$) since the proff for all other axes are similar. Assume that due to the adjacency of the points  $p_a$ and $p_b$ $a-axis$ is updated. Based on the construction this happens in following two cases only.\\  
	Case 1:. $y_a = y_b,\; x_a$ is odd and $x_b=x_a-1$ or\\
	Case 2: $y_a = y_b,\; x_a$ is even and $x_b=x_a+1$.\\ 
	In Case 1, for the next adjacent pair ($p_b, p_c$) if $y_b \neq y_c$ then the claim holds because in this case either $c-axis$ or $d-axis$ is updated. On the other hand if $y_b=y_c$ it suffice to show that due to the adjacency of points $p_b$ and $p_c$  $a-axis$ is not updated. Since $x_a$ is odd and $x_b=x_a-1$, the x-coordinate $x_b$ of the point $p_b$ is even. Consequently the x-coordinate $x_c$ of the point $p_c$ is odd $(x_c = x_b-1)$. According to the construction this case will result in the update of $b-axis$. 
	Similarly in Case 2. for the next adjacent pair($p_b, p_c$) if $y_b \neq y_c$ then the claim holds because in this case either $c-axis$ or $d-axis$ is updated. On the other hand if $y_b=y_c$ it suffice to show that due to the adjacency of points $p_b$ and $p_c$ $a-axis$ is not updated. Since $x_a$ is even and $x_b=x_a-1$, the x-coordinate $x_b$ of the point $p_b$ is odd. Consequently the x-coordinate $x_c$ of the point $p_c$ is even $(x_c = x_b+1)$. According to the construction this case will result in the update of $b-axis$.
	
	The selection of axis that is to be updated is done based on the parity of $x$ and $y$ coordinates of the grid points during construction of points in $R^4$. Thus, three consecutive points will never share same axis.
\end{proof}	 

\begin{theorem}
	Let $G$ be a grid in $R^{2}$. $G$ has a Hamiltonian circuit if and only if the set of points $Q$ constructed in $R^{4}$ has a rectilinear tour of $3n$ links. 
\end{theorem}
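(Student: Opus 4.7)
The plan is to prove both directions of the biconditional. For the forward direction, I would start from a Hamiltonian circuit $p_{\sigma(1)} - p_{\sigma(2)} - \cdots - p_{\sigma(n)} - p_{\sigma(1)}$ in $G$ and build a rectilinear tour on $Q$ with exactly $3n$ links. By Lemma~\ref{l1}(i), each cyclic-consecutive pair $(q_{\sigma(j)}, q_{\sigma(j+1)})$ shares exactly one coordinate and differs in the other three, so I can join them by a rectilinear sub-path of exactly three axis-parallel links, changing one differing coordinate at a time through two fresh intermediate turn points. Concatenating the $n$ sub-paths closes into a $3n$-link tour. Relaxed general position ensures each link carries at most one point of $Q$, and there is enough freedom in the choice of turn-point values (taking them fresh and avoiding the coordinate values already in use) to keep the tour simple.

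For the backward direction, I would start from a rectilinear tour $\tau$ on $Q$ of $3n$ links and recover a Hamiltonian circuit of $G$. Because relaxed general position places at most one point of $Q$ on any axis-parallel line, each link of $\tau$ contains at most one constructed point, so $\tau$ visits the points in a well-defined cyclic order $q_{\sigma(1)}, \ldots, q_{\sigma(n)}$. Between any two consecutive visits, the sub-path of $\tau$ must change every coordinate in which the endpoints differ, and since a rectilinear link changes exactly one coordinate, the sub-path uses at least three links if $p_{\sigma(j)}$ and $p_{\sigma(j+1)}$ are adjacent in $G$ (Lemma~\ref{l1}(i)) and at least four links otherwise (Lemma~\ref{l1}(ii)). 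Summing the $n$ legs yields at least $3n$ links, with equality forcing every cyclic pair to be grid-adjacent, which is exactly the condition for $p_{\sigma(1)} - \cdots - p_{\sigma(n)} - p_{\sigma(1)}$ to be a Hamiltonian circuit of $G$.

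The main obstacle I foresee is sharpening the backward lower bound so that it remains genuinely tight at $3n$: the naive per-leg counting leaves open the possibility that a constructed point $q_{\sigma(j+1)}$ lies in the interior of a straight-through link that is simultaneously the last link of one leg and the first of the next, which would save one link per such merger. This is where I expect Lemma~\ref{l1}(iii) to do the essential work; by forcing the axes shared by two successive adjacent pairs in the visit order to be different, it restricts both the candidate merge axis and the coordinate-monotonicity condition that any interior-of-link visit would require, and a careful case analysis using this structural constraint should confirm that such savings cannot push the total below $3n$ in the Hamiltonian regime. Verifying simplicity of the concatenated sub-paths in the forward direction is comparatively routine, since each new sub-path uses fresh coordinate values on its internal turn points.
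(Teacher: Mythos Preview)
Your proposal is correct and follows essentially the same two-direction argument as the paper: build three axis-parallel links per adjacent pair in the forward direction, and count at least three (respectively four) links per leg in the backward direction to force grid-adjacency of every consecutive pair. You are in fact more careful than the paper on one point: the paper's reverse implication simply asserts that three links per consecutive pair sums to a $3n$ lower bound without ever addressing the link-merging concern you raise, whereas you correctly flag that an interior visit could collapse the last link of one leg with the first link of the next and point to Lemma~\ref{l1}(iii)---which the paper states and proves but never explicitly invokes in the theorem's proof---as the structural tool to close that gap.
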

\begin{proof}
	\textit{Direct implication}: Let $G$ has a Hamiltonian circuit $H$. We construct rectilinear tour in $ R^{4}$.  Consider two consecutive points $p_{i}$ and $p_{j}$ in $H$. By \textit{Lemma \ref{l1}}, the corresponding points $q_{i}$ and $q_{j}$ in $R^{4}$ will share one coordinate and differ all others. Without loss of generality, assume that the points share $a-axis$. Now we connect the points $q_{i}$ and $q_{j}$ using $3$ rectilinear links $L_{1}, L_{2}$ and $L_{3}$ as follows.\\
	$L_{1} = ((a_{i}, b_{i}, c_{i}, d_{i}), (a_{i}, b_{j}, c_{i}, d_{i})),\\ 
	L_{2} = ((a_{i}, b_{j}, c_{i}, d_{i}), (a_{i}, b_{j}, c_{j}, d_{i})),\\
	L_{3} = ((a_{i}, b_{j}, c_{j}, d_{i}), (a_{i}, b_{j}, c_{j}, d_{j})).$\\
	The rectilinear line segment 	$L_{1} $ connects the points $(a_{i}, b_{i}, c_{i}, d_{i})$ and $(a_{i}, b_{j}, c_{i}, d_{i})$ where the second point $(a_{i}, b_{j}, c_{i}, d_{i})$ is not a constructed point.
	We call such a point as an \textit{intermediate point}.
	$L_{1} $ is parallel to $b-axis$. The line segment $L_{2}$ connects two intermediate points $(a_{i}, b_{j}, c_{i}, d_{i})$ and $(a_{i}, b_{j}, c_{j}, d_{i})$ and is parallel to $c-axis$. Finally the line segment  $L_{3}$ connects the intermediate point $(a_{i}, b_{j}, c_{j}, d_{i})$ with the constructed point $(a_{i}, b_{j}, c_{j}, d_{j})$ using a line segment parallel to $d-axis$. Thus a line segment either connects one constructed point with a intermediate point or it connects two intermediate points.  Since the three rectilinear links $L_{1}, L_{2}$ and $L_{3}$ are perpendicular to each other, no two consecutive line segments merge into a single link. Thus any rectilinear line segment covers atmost one constructed point. 
	
	\textit{Reverse implication}: Assume that there exists a rectilinear tour $T$ of $3n$ links in $R^{4}$. Let $q_{i}$ and $q_{j}$  be two consecutive constructed points in $T$.  The intermediate points used to connect $q_{i}$ and $q_{j}$  will share coordinates with  $q_{i}$ and $q_{j}$. For example the intermediate point $(a_{i}, b_{j}, c_{i}, d_{i})$ shares $b-axis$ with $q_{j}$ and all others with $q_{i}$. From \textit{Lemma \ref{l1}} it is clear that any two points will share exactly one coordinte if and only if they are adjacent  in grid and otherwise all four coordinates will be different. Thus the tour is a simple tour and at least $3$ rectilinear links are required to connect $q_{i}$ and $q_{j}$ and a minimum of $3n$ rectilinear links are required to cover $n$ points. Further, if a rectilinear tour of $3n$ links exists in $Q$ then every pair of points $(q_{i}, q_{j})$ in $R^{4}$ must correspond to $(p_{i}, p_{j})$ in $G$ where $p_{i}, p_{j}$ are adjacent. Thus corresponding to the rectilinear tour $(q_{1}, q_{2}... q_{n}, q_{1})$ we have a simple tour $(p_{1}, p_{2}...p_{n}, p_{1})$ where clearly $(p_{1}, p_{2}...p_{n}, p_{1})$ is a Hamiltonian circuit in $G$.	
\end{proof}

\section{Minimum-Link Rectilinear Covering path in $R^{4}$} 
 Minimum Link Rectilinear Covering Tour was shown to be NP-hard in $R^{4}$ in the earlier section. We employ the same construction to construct $Q$, to prove that the Minimum-Link Rectilinear Covering Path is NP-hard in $R^{4}$ where two points $p_{1}$ and $p_{2}$ in  $G$ are given as starting and ending points. The Hamiltonian path in grid graphs problem consists of grid graph defined by a set of grid points and two specified points as the start and end of the Hamiltonian path.
 Hamiltonian path in grid graphs problem was shown to be NP-complete by Itai et al. \cite{itai1982hamilton}. 
 
  Using the same construction method elucidated in Section 3 we construct the set $Q$ of points in $R^{4}$ with two points $q_{1}$ and $q_{2}$ specified as start and end points, which corresponds to the points $p_{1}$ and $p_{2}$ in grid. 
   Following theorem shows the reduction.
\begin{theorem}
	Let $G$ be a grid in $R^{2}$. $G$ has a Hamiltonian path if and only if the set of points $Q$ constructed in $R^{4}$ has a rectilinear path of $3(n-1)$ links. 
\end{theorem}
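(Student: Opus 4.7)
My plan is to mirror the proof of the previous theorem, replacing the closed Hamiltonian circuit and its $n$ edges by an open Hamiltonian path and its $n-1$ edges, while reusing the same three-link perpendicular gadget between consecutive constructed points and the same construction $Q$ of Section 3.

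For the \emph{direct implication}, starting from a Hamiltonian path $p_1 = p_{i_1}, p_{i_2}, \ldots, p_{i_n} = p_2$ in $G$, I would traverse the $n-1$ consecutive pairs in order. For each pair $(p_{i_k}, p_{i_{k+1}})$, Lemma~\ref{l1}(i) guarantees that $q_{i_k}$ and $q_{i_{k+1}}$ share exactly one coordinate, so the same three mutually perpendicular rectilinear links $L_1, L_2, L_3$ defined in the tour construction can be used to connect them. Concatenating these $n-1$ triples, all starting at $q_1$ and ending at $q_2$, produces a rectilinear covering path with $3(n-1)$ links. Lemma~\ref{l1}(iii) is needed here to rule out that the last link of one triple and the first link of the next fuse across the shared endpoint $q_{i_{k+1}}$: since three consecutive images never all share an axis, there is always a valid ordering of the three perpendicular sub-links in each triple so that no two consecutive sub-links (within or across triples) are parallel to the same axis.

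For the \emph{reverse implication}, suppose a rectilinear covering path of $3(n-1)$ links exists in $Q$ with endpoints $q_1$ and $q_2$. By Lemma~\ref{l1}(ii), any two constructed points whose grid preimages are non-adjacent differ in all four coordinates, so connecting them by axis-aligned segments requires at least $4$ links; by Lemma~\ref{l1}(i), adjacent preimages produce points sharing one coordinate and hence require at least $3$ links. There are $n-1$ consecutive constructed-to-constructed transitions in any covering path on $Q$, and the total link count is exactly $3(n-1)$, so every transition must use exactly $3$ links. This forces every consecutive pair of constructed points in the path to correspond to a grid-adjacent pair in $G$, and the induced ordering $p_1, p_{i_2}, \ldots, p_{i_{n-1}}, p_2$ is therefore a Hamiltonian path in $G$ with the required specified endpoints. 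NP-hardness then follows from the NP-completeness of the Hamiltonian path problem on grid graphs with specified endpoints due to Itai et al.~\cite{itai1982hamilton}.

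The only non-routine step I foresee is the seam-merging check in the direct implication: although each individual triple $L_1, L_2, L_3$ is automatically unmergeable internally (being mutually perpendicular), consecutive triples could in principle fuse at $q_{i_{k+1}}$ if the last axis of one triple coincides with the first axis of the next, which would drop the link count below $3(n-1)$ and, more importantly, obscure the correspondence with grid edges. The argument must invoke Lemma~\ref{l1}(iii) to select the axis ordering of each triple compatibly with its neighbours. Aside from this bookkeeping, the reduction is a direct adaptation of the covering-tour argument.
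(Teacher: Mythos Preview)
Your proposal takes essentially the same approach as the paper: both directions reuse the three-link gadget $L_1,L_2,L_3$ between consecutive constructed points and the counting argument that each transition requires at least three links, so that a $3(n-1)$-link path forces every consecutive pair to come from an adjacency in $G$. You are in fact slightly more careful than the paper in two places---you make the 4-versus-3 link distinction for non-adjacent versus adjacent preimages explicit, and you flag the seam-merging concern at the shared endpoint $q_{i_{k+1}}$ and invoke Lemma~\ref{l1}(iii) to resolve it---whereas the paper's own proof of this theorem does not address either point explicitly.
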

\begin{proof}
	Let $G$ have a Hamiltonian path $P$. We construct rectilinear path in $ R^{4}$. Consider two consecutive points $p_{i}$ and $p_{j}$ in $P$. By Lemma 3.1, the corresponding constructed points $q_{i}$ and $q_{j}$ in $R^{4}$ will share one coordinate and differ in all others. Without loss of generality, assume that the points share $a-axis$. Now we connect the points $q_{i}$ and $q_{j}$ using $3$ rectilinear links $L_{1}, L_{2}$ and $L_{3}$ as follows.\\ 
	$L_{1} = ((a_{i}, b_{i}, c_{i}, d_{i}), (a_{i}, b_{j}, c_{i}, d_{i})),\\ 
	L_{2} = ((a_{i}, b_{j}, c_{i}, d_{i}), (a_{i}, b_{j}, c_{j}, d_{i})),\\
	L_{3} = ((a_{i}, b_{j}, c_{j}, d_{i}), (a_{i}, b_{j}, c_{j}, d_{j})).$\\
	In this case the points $p_{1}$ and $p_{2}$ are the starting and ending points of Hamiltonian path $P$.  So $3(n-1)$ links suffice to create the rectilinear covering path in $R^{4}$.\\
	
	\textit{Reverse implication}: Assume that there exists a rectilinear path of $3(n-1)$ links in $R^{4}$. Consider two consecutive constructed points $q_{i}$ and $q_{j}$ in $Q$. 
	As shown in Lemma 3.1 $q_{i}$ and $q_{j}$ share at most one axis and hence $3$ rectilinear links are required to connect $q_{i}$ and $q_{j}$. Thus, a minimum of $3(n-1)$ rectilinear links are required to cover $n-1$ edges.
	Further, if a rectilinear path  $P^*$ of $3(n-1)$ links exists in $Q$ then every $(q_{i}, q_{i+1})$ in $P^*$ in $R^{4}$ must correspond to $(p_{i}, p_{i+1})$ in $G$ where $p_{i}, p_{i+1}$ are consecutive points in $P$. 
	Thus, corresponding to the rectilinear path $(q_{1}, q_{i1}... q_{i~n-2}, q_{2})$ we have a simple path $(p_{1}, p_{i1}...p_{i~n-2}, p_{2})$ where clearly $(p_{1}, p_{i1}...p_{i~n-2}, p_{2})$ is a Hamiltonian path in $G$.	
\end{proof}

\section{Conclusions}

We prove that Minimum-Link Rectilinear Covering Tour is NP-hard in $R^{4}$ when the points are in relaxed general position. We also prove that Minimum-Link Rectilinear Covering Path for a given pair of start and end points is NP-hard in $R^{4}$ when the points are in relaxed general position.

\section{Acknowledgements}
Indulekha TS participated in the discussions.

\end{document}